\providecommand{\U}[1]{\protect\rule{.1in}{.1in}}
\newtheorem{theorem}{Theorem}
\newtheorem{acknowledgement}[theorem]{Acknowledgement}
\newtheorem{corollary}[theorem]{Corollary}
\newtheorem{lemma}[theorem]{Lemma}
\newtheorem{proposition}[theorem]{Proposition}
\newenvironment{proof}[1][Proof]{\noindent\textbf{#1.} }{\ \rule{0.5em}{0.5em}}
\begin{document}

\title{Generalized Anti-Wick Quantum States }
\author{Maurice de Gosson\thanks{maurice.de.gosson@univie.ac.at}\\University of Vienna\\Faculty of Mathematics, NuHAG\\and\\University of W\"{u}rzburg\\Institute of Mathematics}
\maketitle

\begin{abstract}
The purpose of this paper is to study a simple class of mixed states and the
corresponding density operators (matrices). These operators, which we call
quite Toeplitz density operators correspond to states obtained from a fixed
function (\textquotedblleft window\textquotedblright) by position-momentum
translations, and reduce in the simplest case to the anti-Wick operators
considered long ago by Berezin. The rigorous study of Toeplitz operators
requires the use of classes of functional spaces defined by Feichtinger.

\end{abstract}

\section{Introduction}

\subsubsection*{Motivations}

A quantum mixed state on $\mathbb{R}^{n}$ consists of a collection of pairs
$\{(\psi_{j},\alpha_{j})\}$ where the $\psi_{j}\in L^{2}(\mathbb{R}^{n})$ are
\textquotedblleft pure states\textquotedblright\ and the $\alpha_{j}$ are
probabilities summing up to one. Given a mixed state one defines its density
operator as the bounded operator $\widehat{\rho}=\sum_{j}\alpha_{j}%
\widehat{\Pi}_{j}$ on $L^{2}(\mathbb{R}^{n})$ where $\widehat{\Pi}_{j}$ is the
orthogonal projection in $L^{2}(\mathbb{R}^{n})$ onto the ray $\mathbb{C}%
\psi_{j}=\{\alpha\psi_{j}:\alpha\in\mathbb{C}\}$. The corresponding Wigner
distribution is by definition the function $\rho\in L^{2}(\mathbb{R}^{2n})$
defined by
\begin{equation}
\rho(z)=\sum_{j}\alpha_{j}W\psi_{j}(z) \label{ro1}%
\end{equation}
here $W\psi_{j}$ is the usual Wigner transform of $\psi_{j}$. In this work we
will deal with a particular class of mixed states, and their generalization to
a continuous setting. Consider the ground state
\begin{equation}
\phi_{z_{0}}(x)=(\pi\hbar)^{-n/4}e^{ip_{0}(x-x_{0})}e^{-|x-x_{0}|^{2}/2\hbar}
\label{pho1}%
\end{equation}
of a linear oscillator $H=\frac{1}{2}(|p-p_{0}|^{2}+|x-x_{0}|^{2})$ whose
center $z_{0}=(x_{0},p_{0})$ is not know with precision. Suppose first we have
some partial information telling us that the center $z_{0}$ is located
somewhere on a lattice $\Lambda\subset\mathbb{R}^{2n}$ consisting of a
discrete set of phase space points $z_{\lambda}=(x_{\lambda},p_{\lambda})$ and
that there is a probability $\mu_{z_{\lambda}}$ that the system is precisely
in the state $\phi_{z_{\lambda}}$. Denoting by $\widehat{\Pi}_{z_{\lambda}}$
the orthogonal projection on $\mathbb{C}\phi_{z_{\lambda}}$ the corresponding
density operator
\begin{equation}
\widehat{\rho}=\sum_{z_{\lambda}\in\Lambda}\mu_{z_{\lambda}}\widehat{\Pi
}_{z_{\lambda}} \label{ro2}%
\end{equation}
has Wigner distribution%
\begin{equation}
\rho(z)=\sum_{z_{\lambda}\in\Lambda}\mu_{z_{\lambda}}W\phi_{z_{\lambda}}(z)~.
\label{ro3}%
\end{equation}
Observing that $W\phi_{z_{\lambda}}(z)=W\phi_{0}(z-z_{\lambda})$ this reduces
to the Gabor-type \cite{Faul} expansion
\begin{equation}
\rho(z)=\sum_{z_{\lambda}\in\Lambda}\mu_{z_{\lambda}}W\phi_{0}(z-z_{\lambda
})~. \label{Gabor1}%
\end{equation}
Let us now depart from the discrete case, and consider the somewhat more
realistic situation where the center of the linear oscillator can be any point
$z^{\prime}$ in phase space $\mathbb{R}^{2n}$; we assume the latter comes
equipped with a certain Borel probability density $\mu$. Formula
(\ref{Gabor1}) suggests that we define in this case a generalized Wigner
distribution by
\begin{equation}
\rho(z)=\int\mu(z^{\prime})W\phi_{0}(z-z^{^{\prime}})d^{2n}z^{\prime}=(\mu\ast
W\phi_{0})(z)~. \label{Gabor2}%
\end{equation}
Notice that if one chooses for $\mu$ the atomic measure $\sum_{z_{\lambda}%
\in\Lambda}\mu_{z_{\lambda}}\delta_{z_{\lambda}}$ then (\ref{Gabor2}) reduces
to the discrete sum (\ref{ro3}). It turns out that the operator $\widehat{\rho
}$ obtained from $\rho$ using the Weyl correspondence is a well-known
mathematical object: it is the \textit{anti-Wick operator} with Weyl symbol
$\rho$. Such operators were first considered by Berezin \cite{Berezin}, and
have been developed since by many independent authors. A further
generalization of (\ref{Gabor2}) now consists in replacing the standard
Gaussian $\phi_{0}$ by an arbitrary square integrable function $\phi$ and to
consider the Weyl transforms of functions of the type
\begin{equation}
\rho(z)=\int\mu(z^{\prime})W\phi(z-z^{^{\prime}})d^{2n}z^{\prime}=(\mu\ast
W\phi)(z)~. \label{Gabor3}%
\end{equation}
The operators thus obtained are called \textit{Toeplitz operators} (or
\textit{localization operators}) in the mathematical literature; they are
natural generalizations of anti-Wick operators \cite{Englis,grotoft,sh87}. So
far, so good. The rub comes from the fact that it is not clear why the Weyl
transform $\widehat{\rho}$ of (\ref{Gabor2}) --- let alone that of
(\ref{Gabor3})! --- should indeed be density operators. For this,
$\widehat{\rho}$ has to satisfy three stringent conditions: \textit{(i)}
$\widehat{\rho}$ must be positive semi-definite: $\widehat{\rho}\geq0$ and
\textit{(ii)} be self-adjoint: $\widehat{\rho}=\widehat{\rho}^{\ast}$;
\textit{(iii)} $\widehat{\rho}$ must be of trace class and have trace one:
$\operatorname*{Tr}(\widehat{\rho})=1$. While it is easy to verify (i) (which
implies (ii) since $L^{2}(\mathbb{R}^{n})$ is complex), it is the third
condition which poses problem since it is certainly not trivially satisfied,
as we will see in the course of this paper. Viewed in a broader perspective,
quasi-distributions of the type $\rho=\mu\ast W\phi$ belong to the Cohen class
\cite{Birkbis,Gro}, which has a rich internal structure and is being currently
very much investigated for its own sake both in time-frequency analysis
\cite{boco1,boco,bocogr,cogr,coro,LuSk1} and in quantum mechanics \cite{BJ}.

\subsubsection*{Main results}

\subsubsection*{Notation and terminology}

The scalar product on $L^{2}(R^{n})$ is defined by%
\begin{equation}
(\psi|\phi)_{L^{2}}=\int\psi(x)\phi^{\ast}(x)d^{n}x \label{L2}%
\end{equation}
and we thus have $(\psi|\phi)_{L^{2}}=\langle\phi|\psi\rangle$ in Dirac
bra-ket notation; in this notation $\widehat{\Pi}_{\phi}=|\phi\rangle
\langle\phi|$. The phase space $T^{\ast}R^{n}\equiv R^{2n}$ will be equipped
with the canonical symplectic structure $\sigma=\sum_{j=1}^{n}dp_{j}\wedge
dx_{j}$, given in matrix notation by $\sigma(z,z^{\prime})=Jz\cdot z^{\prime}$
where $J=%
\begin{pmatrix}
0 & I\\
-I & 0
\end{pmatrix}
$ is the standard symplectic matrix on $R^{2n}$. We denote by
\begin{equation}
\widehat{T}(z)=\widehat{T}(x,p)=e^{-i(x\widehat{p}-p\widehat{x})/\hbar}
\label{hw}%
\end{equation}
the Heisenberg--Weyl displacement operator. If $a\in\mathcal{S}^{\prime
}(\mathbb{R}^{2n})$ is a symbol on $\mathbb{R}^{2n}$ the corresponding Weyl
operator is
\begin{equation}
\widehat{A}_{\mathrm{W}}=\operatorname*{Op}\nolimits_{\mathrm{W}}(a)=\int
a_{\sigma}(z)\widehat{T}(z)d^{2n}z \label{aweyl}%
\end{equation}
where $a_{\sigma}$ is the symplectic Fourier transform of $a$, formally
defined by%
\begin{equation}
a_{\sigma}(z)=\left(  \tfrac{1}{2\pi\hbar}\right)  ^{n}\int e^{-\frac{i}%
{\hbar}\sigma(z,z^{\prime})}a(z^{\prime})d^{2n}z^{\prime}~. \label{sft}%
\end{equation}
The Wigner transform of $\psi\in L^{2}(\mathbb{R}^{n})$ is defined by
\begin{equation}
W\psi(z)=\left(  \tfrac{1}{2\pi\hbar}\right)  ^{n}\int e^{-\tfrac{i}{\hbar}%
py}\psi(x+\tfrac{1}{2}y)\psi^{\ast}(x-\tfrac{1}{2}y)d^{n}y~. \label{wt}%
\end{equation}
Similarly, the cross-Wigner transform of two functions $\psi,\phi\in
L^{2}(\mathbb{R}^{n})$ is defined by
\begin{equation}
W(\psi,\phi)(z)=\left(  \tfrac{1}{2\pi\hbar}\right)  ^{n}\int e^{-\tfrac
{i}{\hbar}py}\psi(x+\tfrac{1}{2}y)\phi^{\ast}(x-\tfrac{1}{2}y)d^{n}y~.
\label{crosswt}%
\end{equation}
We have
\begin{equation}
\int W(\psi,\phi)(z)d^{2n}z=(\psi|\phi)_{L^{2}}~. \label{marge}%
\end{equation}

\section{Density Operators: Summary}

A density operator on $L^{2}(\mathbb{R}^{n})$ is by definition a bounded
operator $\widehat{\rho}$ which is semidefinite positive: $\widehat{\rho}%
\geq0$ (and hence self-adjoint), and has trace $\operatorname*{Tr}%
(\widehat{\rho})=1$. The set $\mathcal{D}=\mathcal{D}(L^{2}(\mathbb{R}^{n}))$
of density operators is a convex subset of the space $\mathcal{L}%
_{1}=\mathcal{L}_{1}(L^{2}(\mathbb{R}^{n}))$ of trace class operators (recall
that the latter is a two-sided ideal of the algebra $\mathcal{B}%
=\mathcal{B}(L^{2}(\mathbb{R}^{n}))$ of bounded operators on $L^{2}%
(\mathbb{R}^{n})$). In particular density operators are compact operators, and
hence, by the spectral theorem, there exists an orthonormal basis $(\phi
_{j})_{j}$ of $L^{2}(\mathbb{R}^{n})$ and coefficients satisfying $\lambda
_{j}\geq0$ and $\sum_{j}\lambda_{j}=1$ such that $\widehat{\rho}$ can be
written as a convex sum $\sum_{j}\lambda_{j}\widehat{\Pi}_{\phi_{j}}$ of
orthogonal projections $\widehat{\Pi}_{\phi_{j}}$ converging in the strong
operator topology, that is, counting the multiplicities,
\begin{equation}
\widehat{\rho}\psi=\sum_{j}\lambda_{j}(\psi|\phi_{j})_{L^{2}}\phi_{j}~.
\label{density}%
\end{equation}
By definition, the \textquotedblleft Wigner distribution\textquotedblright\ of
$\widehat{\rho}$ is the function
\begin{equation}
\rho=\sum_{j}\lambda_{j}W\phi_{j}\in L^{2}(\mathbb{R}^{2n})\cap L^{\infty
}(\mathbb{R}^{2n}) \label{rhowig}%
\end{equation}
where $W\psi_{j}$ is the usual Wigner transform of $\phi_{j}$, defined for
$\psi\in L^{2}(\mathbb{R}^{n})$ by
\begin{equation}
W\psi(z)=\left(  \tfrac{1}{2\pi\hbar}\right)  ^{n}\int e^{-\tfrac{i}{\hbar}%
py}\psi(x+\tfrac{1}{2}y)\psi^{\ast}(x-\tfrac{1}{2}y)d^{n}y~. \label{fouwig}%
\end{equation}

\begin{proposition}
\label{Prop1}The Wigner distribution (\ref{rhowig}) is $(2\pi\hbar)^{-n}$
times the Weyl symbol of $\widehat{\rho}$:%
\begin{equation}
\widehat{\rho}=(2\pi\hbar)^{n}\operatorname*{Op}\nolimits_{\mathrm{W}}(\rho).
\label{rhoweyl}%
\end{equation}

\end{proposition}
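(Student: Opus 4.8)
The plan is to reduce the statement to the case of a single rank-one projection, settle that case by a direct kernel computation, and then recover the general formula by linearity together with a continuity argument. For the rank-one case, fix a unit vector $\phi\in L^{2}(\mathbb{R}^{n})$; since $\widehat{\Pi}_{\phi}\psi=(\psi|\phi)_{L^{2}}\phi$, the operator $\widehat{\Pi}_{\phi}=|\phi\rangle\langle\phi|$ has integral kernel $K_{\phi}(x,y)=\phi(x)\phi^{\ast}(y)$. Unwinding the definition (\ref{aweyl})--(\ref{sft}) of $\operatorname*{Op}\nolimits_{\mathrm{W}}$ yields the standard relation between a Weyl symbol $a$ and the kernel $K$ of $\operatorname*{Op}\nolimits_{\mathrm{W}}(a)$, namely $a(x,p)=\int e^{-\frac{i}{\hbar}py}K(x+\tfrac{1}{2}y,x-\tfrac{1}{2}y)\,d^{n}y$. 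Substituting $K_{\phi}$ into this and comparing with (\ref{fouwig}) gives $a(x,p)=(2\pi\hbar)^{n}W\phi(x,p)$, that is $\widehat{\Pi}_{\phi}=(2\pi\hbar)^{n}\operatorname*{Op}\nolimits_{\mathrm{W}}(W\phi)$.

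Next I would pass to the series. By (\ref{density}) one has $\widehat{\rho}=\sum_{j}\lambda_{j}\widehat{\Pi}_{\phi_{j}}$, and since the $\phi_{j}$ are orthonormal with $\lambda_{j}\geq0$ and $\sum_{j}\lambda_{j}=1$, the partial sums converge to $\widehat{\rho}$ in the trace norm, hence a fortiori in the Hilbert--Schmidt norm, because $\Vert\widehat{\rho}-\sum_{j\leq N}\lambda_{j}\widehat{\Pi}_{\phi_{j}}\Vert_{\mathcal{L}_{1}}=\sum_{j>N}\lambda_{j}\to0$. Now $\operatorname*{Op}\nolimits_{\mathrm{W}}$ restricts to a topological isomorphism of $L^{2}(\mathbb{R}^{2n})$ onto the Hilbert--Schmidt class $\mathcal{L}_{2}$ (this is Moyal's identity), so its inverse is continuous; applying it to the partial sums and using the rank-one identity termwise, I obtain that $(2\pi\hbar)^{n}\sum_{j\leq N}\lambda_{j}W\phi_{j}$ converges in $L^{2}(\mathbb{R}^{2n})$ to $(2\pi\hbar)^{n}$ times the Weyl symbol of $\widehat{\rho}$. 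This $L^{2}$-limit is exactly $(2\pi\hbar)^{n}\rho$: the same estimate shows that $\rho\in L^{2}(\mathbb{R}^{2n})$, while $\rho\in L^{\infty}(\mathbb{R}^{2n})$ follows from the uniform boundedness of each $W\phi_{j}$ together with $\sum_{j}\lambda_{j}=1$. Hence $\widehat{\rho}=(2\pi\hbar)^{n}\operatorname*{Op}\nolimits_{\mathrm{W}}(\rho)$.

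The one genuinely delicate point --- the mild obstacle here --- is the interchange of $\operatorname*{Op}\nolimits_{\mathrm{W}}$ with the infinite sum, and this is exactly what the $\mathcal{L}_{2}$/$L^{2}$ continuity of the Weyl correspondence provides; it is also the reason one records in (\ref{rhowig}) that $\rho$ belongs to $L^{2}(\mathbb{R}^{2n})$. If one prefers to avoid invoking Hilbert--Schmidt theory, an alternative is to test both operators against an arbitrary $\psi\in\mathcal{S}(\mathbb{R}^{n})$ and use dominated convergence to move the sum past the integral defining $(\operatorname*{Op}\nolimits_{\mathrm{W}}(\rho)\psi)(x)$, which again reduces everything to the rank-one identity.
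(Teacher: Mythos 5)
Your proof is correct and follows essentially the same route as the paper: the rank-one computation via the kernel $K_{\phi}(x,y)=\phi(x)\phi^{\ast}(y)$ and the symbol--kernel formula is exactly the paper's Lemma \ref{Lemma1}, from which the paper declares the proposition to follow immediately. The only difference is that you explicitly justify interchanging $\operatorname*{Op}\nolimits_{\mathrm{W}}$ with the infinite sum via trace-norm convergence and the $\mathcal{L}_{2}$/$L^{2}$ continuity of the Weyl correspondence --- a detail the paper leaves implicit, and a welcome addition.
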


The proof of this result immediately follows from:

\begin{lemma}
\label{Lemma1}The orthogonal projection $\widehat{\Pi}_{\phi}$ of
$L^{2}(\mathbb{R}^{n})$ on $\mathbb{C}\phi$ ($\phi\in L^{2}(\mathbb{R}^{n})$)
has Weyl symbol $(2\pi\hbar)^{n}W\phi$.
\end{lemma}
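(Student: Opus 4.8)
The plan is to identify the two operators by comparing their integral kernels. We use, as is implicit in the statement, that $\|\phi\|_{L^{2}}=1$, so that $\widehat{\Pi}_{\phi}=|\phi\rangle\langle\phi|$ is genuinely the orthogonal projection onto $\mathbb{C}\phi$; its kernel is then $(x,y)\mapsto\phi(x)\phi^{\ast}(y)$, since $\widehat{\Pi}_{\phi}\psi=(\psi|\phi)_{L^{2}}\phi=\int\phi(x)\phi^{\ast}(y)\psi(y)\,d^{n}y$. On the other side, the familiar integral representation of the Weyl operator reads
\begin{equation*}
\operatorname*{Op}\nolimits_{\mathrm{W}}(a)\psi(x)=\left(\tfrac{1}{2\pi\hbar}\right)^{n}\iint e^{\frac{i}{\hbar}p(x-y)}\,a\!\left(\tfrac{x+y}{2},p\right)\psi(y)\,d^{n}p\,d^{n}y ,
\end{equation*}
so that $\operatorname*{Op}\nolimits_{\mathrm{W}}(a)$ has kernel $K_{a}(x,y)=\left(\tfrac{1}{2\pi\hbar}\right)^{n}\int e^{\frac{i}{\hbar}p(x-y)}a\!\left(\tfrac{x+y}{2},p\right)d^{n}p$. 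It therefore suffices to prove that $K_{a}(x,y)=\phi(x)\phi^{\ast}(y)$ when $a=(2\pi\hbar)^{n}W\phi$.

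To that end I would substitute the definition (\ref{fouwig}) of $W\phi$ into $K_{a}$, using $u$ for the integration variable of (\ref{fouwig}); the factor $(2\pi\hbar)^{n}$ carried by the symbol $a=(2\pi\hbar)^{n}W\phi$ cancels the prefactor of $K_{a}$, and one is left with
\begin{equation*}
K_{a}(x,y)=\left(\tfrac{1}{2\pi\hbar}\right)^{n}\iint e^{\frac{i}{\hbar}p(x-y-u)}\,\phi\!\left(\tfrac{x+y}{2}+\tfrac{u}{2}\right)\phi^{\ast}\!\left(\tfrac{x+y}{2}-\tfrac{u}{2}\right)d^{n}u\,d^{n}p ,
\end{equation*}
the remaining $(2\pi\hbar)^{-n}$ being the one inherent in (\ref{fouwig}). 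The integral in $p$ equals $(2\pi\hbar)^{n}\delta(x-y-u)$, which absorbs that prefactor and reduces the $u$-integration to evaluation at $u=x-y$; the arguments $\tfrac{x+y}{2}\pm\tfrac{u}{2}$ thereby become $x$ and $y$, so $K_{a}(x,y)=\phi(x)\phi^{\ast}(y)$. Comparing kernels gives $\widehat{\Pi}_{\phi}=\operatorname*{Op}\nolimits_{\mathrm{W}}\!\big((2\pi\hbar)^{n}W\phi\big)$, which is the assertion.

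The only delicate point is the rigor of this $\delta$-function step, since for a general $\phi\in L^{2}(\mathbb{R}^{n})$ the inner $p$-integral is not absolutely convergent. I would handle it by first carrying out the computation for $\phi$ in the Schwartz space $\mathcal{S}(\mathbb{R}^{n})$, where the identity $\int e^{\frac{i}{\hbar}p(x-y)}W\phi(\tfrac{x+y}{2},p)\,d^{n}p=\phi(x)\phi^{\ast}(y)$ is nothing but Fourier inversion applied to the Schwartz function $u\mapsto\phi(\tfrac{x+y}{2}+\tfrac{u}{2})\phi^{\ast}(\tfrac{x+y}{2}-\tfrac{u}{2})$, evaluated at $u=x-y$. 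The passage to arbitrary $\phi\in L^{2}(\mathbb{R}^{n})$ is then a density argument: $(\psi,\phi)\mapsto W(\psi,\phi)$ is bounded $L^{2}(\mathbb{R}^{n})\times L^{2}(\mathbb{R}^{n})\to L^{2}(\mathbb{R}^{2n})$, and $\operatorname*{Op}\nolimits_{\mathrm{W}}$ is, up to a constant, a unitary map from $L^{2}(\mathbb{R}^{2n})$ onto the Hilbert--Schmidt class $\mathcal{L}_{2}$; hence $\phi\mapsto\widehat{\Pi}_{\phi}$ and $\phi\mapsto\operatorname*{Op}\nolimits_{\mathrm{W}}((2\pi\hbar)^{n}W\phi)$ are both continuous from $L^{2}(\mathbb{R}^{n})$ into $\mathcal{L}_{2}$, and since they agree on the dense subspace $\mathcal{S}(\mathbb{R}^{n})$ they agree everywhere. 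This continuity/density step, rather than the (routine) kernel computation, is the real content of the argument. An alternative route avoids kernels altogether: pair the unknown symbol of $\widehat{\Pi}_{\phi}$ against an arbitrary test symbol $a$ using $(\operatorname*{Op}\nolimits_{\mathrm{W}}(a)\psi|\psi)_{L^{2}}=\int a(z)\,W\psi(z)\,d^{2n}z$ (compatible with (\ref{marge}) when $a\equiv1$) together with the Weyl trace formula, which reads off the symbol as $(2\pi\hbar)^{n}W\phi$ directly; this however invokes two further standard identities not recalled in the text, so the kernel computation above is preferable for a self-contained proof.
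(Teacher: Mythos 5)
Your proposal is correct, and it rests on the same underlying idea as the paper: identify $\widehat{\Pi}_{\phi}$ by its Schwartz kernel $\phi(x)\phi^{\ast}(y)$ and invoke the Weyl kernel--symbol correspondence. The difference is purely one of direction. The paper reads the dictionary from kernel to symbol, via $a(x,p)=\int e^{-\frac{i}{\hbar}py}K(x+\tfrac{1}{2}y,x-\tfrac{1}{2}y)\,d^{n}y$; applied to $K=\phi\otimes\phi^{\ast}$ this integral \emph{is} the definition (\ref{fouwig}) of $(2\pi\hbar)^{n}W\phi$, and it converges absolutely for every $\phi\in L^{2}(\mathbb{R}^{n})$ (by Cauchy--Schwarz, $\int|\phi(x+\tfrac{1}{2}y)\phi^{\ast}(x-\tfrac{1}{2}y)|\,d^{n}y\leq 2^{n}\|\phi\|_{L^{2}}^{2}$), so the lemma follows in one line with no regularization. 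You instead go from symbol to kernel, which forces the oscillatory $p$-integral and the $\delta$-function step; your way of handling this --- first for $\phi\in\mathcal{S}(\mathbb{R}^{n})$ by Fourier inversion, then extending by continuity of $\phi\mapsto\widehat{\Pi}_{\phi}$ and of $\phi\mapsto\operatorname*{Op}_{\mathrm{W}}((2\pi\hbar)^{n}W\phi)$ into the Hilbert--Schmidt class --- is sound (note both maps are quadratic rather than linear in $\phi$, but the polarization estimates you implicitly need do hold). So your argument is complete, just longer than necessary: the approximation machinery you identify as ``the real content'' evaporates if one runs the kernel--symbol formula in the other direction. Your remark that $\|\phi\|_{L^{2}}=1$ is implicitly assumed is a fair observation and consistent with the paper's convention $\widehat{\Pi}_{\phi}\psi=(\psi|\phi)_{L^{2}}\phi$.
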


\begin{proof}
By definition $\widehat{\Pi}_{\phi}\psi=(\psi|\phi)_{L^{2}}\phi$ so the
distributional kernel of $\widehat{\Pi}_{\phi}$ is the function $K_{\phi
}(x,y)=\phi(x)\phi^{\ast}(y)$; the Weyl symbol of an operator $\widehat{A}%
=\operatorname*{Op}\nolimits_{\mathrm{W}}(a)$ being related to the kernel $K$
of $\widehat{A}$ by the formula \cite{Birkbis,sh87}%
\[
a(z)=\int e^{-\tfrac{i}{\hbar}py}K(x+\tfrac{1}{2}y,x-\tfrac{1}{2}y)d^{n}y
\]
the lemma follows.
\end{proof}

A class of windows $\phi$ playing a privileged role in the study of density
operators is the Feichtinger algebra \cite{Hans1,Hans2}, which is the simplest
modulation space (a non-exhaustive list of references on the topic of
modulation spaces is \cite{Hans0,Gro,Jakob,Toft}. In \cite{Birkbis}, Chapters
16 and 17 we have given a succinct account of the theory using the Wigner
transform instead of the traditional short-time Fourier transforms approach).
By definition Feichtinger's algebra $M^{1}(\mathbb{R}^{n})=S_{0}%
(\mathbb{R}^{n})$ consists of all distributions $\psi\in\mathcal{S}^{\prime
}(\mathbb{R}^{n})$ such that $W(\psi,\phi)\in L^{1}(\mathbb{R}^{2n})$ for
\textit{some} window $\phi\in\mathcal{S}(\mathbb{R}^{n})$; when this condition
holds, we have $W(\psi,\phi)\in L^{1}(\mathbb{R}^{2n})$ for \textit{all}
windows $\phi\in\mathcal{S}(\mathbb{R}^{n})$ and the formula
\begin{equation}
||\psi||_{\phi}=\int|W(\psi,\phi)(z)|d^{2n}z=||W(\psi,\phi)||_{L^{1}} \tag{A1}%
\end{equation}
defines a norm on the vector space $M^{1}(\mathbb{R}^{n})$; another choice of
window $\phi^{\prime}$ the leads to an equivalent norm and one shows that
$M^{1}(\mathbb{R}^{n})$ is a Banach space for the topology thus defined. We
have the following continuous inclusions:
\begin{equation}
\mathcal{S}(\mathbb{R}^{n})\subset M^{1}(\mathbb{R}^{n})\subset C^{0}%
(\mathbb{R}^{n})\cap L^{1}(\mathbb{R}^{n})\cap FL^{1}(\mathbb{R}^{n})\cap
L^{2}(\mathbb{R}^{n}) \tag{A2}\label{inclusions}%
\end{equation}
and $\mathcal{S}(\mathbb{R}^{n})$ is dense in $M^{1}(\mathbb{R}^{n})$.
Moreover, for every $\psi\in L^{2}(\mathbb{R}^{n})$ we have the equivalence
\begin{equation}
\psi\in M^{1}(\mathbb{R}^{n})\Longleftrightarrow W\psi\in L^{1}(\mathbb{R}%
^{n})~. \tag{A3}%
\end{equation}

A particularly important property of $M^{1}(\mathbb{R}^{n})$ is its
metaplectic invariance. Let $\widehat{S}\in\operatorname*{Mp}(n)$ (the
metaplectic group) cover $S\in\operatorname*{Sp}(n)$; then $W(\widehat{S}%
\psi,\widehat{S}\phi)=W(\psi,\phi)\circ S^{-1}$ \cite{Birkbis}; it follows
from this covariance formula and the fact that the choice of $\phi$ is
irrelevant, that $\widehat{S}\psi\in M^{1}(\mathbb{R}^{n})$ if and only if
$\psi\in M^{1}(\mathbb{R}^{n})$. Also, $M^{1}(\mathbb{R}^{n})$ is invariant
under the action of the translations $\widehat{T}(z)$. One can show \cite{Gro}
that $M^{1}(\mathbb{R}^{n})$ is the smallest Banach algebra containing
$\mathcal{S}(\mathbb{R}^{n})$ and having metaplectic and translational invariance.

It turns out that $M^{1}(\mathbb{R}^{n})$ is in addition an algebra for both
pointwise product and convolution; in fact if $\psi,\psi^{\prime}\in
M^{1}(\mathbb{R}^{n})$ then $||\psi\ast\psi^{\prime}||_{\phi}\leq
||\psi||_{L^{1}}||\psi^{\prime}||_{\phi}$ so we have $L^{1}(\mathbb{R}%
^{n})\ast M^{1}(\mathbb{R}^{n})\subset M^{1}(\mathbb{R}^{n})$ hence in
particular
\[
M^{1}(\mathbb{R}^{n})\ast M^{1}(\mathbb{R}^{n})\subset M^{1}(\mathbb{R}%
^{n})~.
\]
Taking Fourier transforms we conclude that $M^{1}(\mathbb{R}^{n})$ is also
closed under pointwise product.

\section{Toeplitz Operators: Definitions and Properties}

Let $\phi$ (hereafter called \emph{window}) be in $M^{1}(\mathbb{R}^{n})$. By
definition, the Toeplitz operator $\widehat{A}_{\phi}=\operatorname*{Op}%
\nolimits_{\phi}(a)$ with window $\phi$ and symbol $a$ is%
\begin{equation}
\widehat{A}_{\phi}=\int a(z)\widehat{\Pi}_{\phi}(z)d^{2n}z \label{toeplitz1}%
\end{equation}
where $\widehat{\Pi}_{\phi}:L^{2}(\mathbb{R}^{n})\longrightarrow
L^{2}(\mathbb{R}^{n})$ is the orthogonal projection onto $\widehat{T}(z)\phi$,
that is%
\begin{equation}
\widehat{\Pi}_{\phi}(z)\psi=(\psi|\widehat{T}(z)\phi)_{L^{2}}\widehat{T}%
(z)\phi~. \label{piffi}%
\end{equation}
We observe that $(\psi|\widehat{T}(z)\phi)_{L^{2}}$ is, up to a factor, the
cross-ambiguity transform of the pair $(\psi,\phi)$; in fact (\cite{Birkbis},
\S 11.4.1)
\begin{equation}
(\psi|\widehat{T}(z)\phi)_{L^{2}}=(2\pi\hbar)^{n}\operatorname{Amb}(\psi
,\phi)(z) \label{psid}%
\end{equation}
where
\begin{equation}
\operatorname{Amb}(\psi,\phi)(z)=\left(  \tfrac{1}{2\pi\hbar}\right)  ^{n}\int
e^{-\tfrac{i}{\hbar}py}\psi(y+\tfrac{1}{2}x)\phi(y-\tfrac{1}{2}x)^{\ast}%
d^{n}y~. \label{crossamb}%
\end{equation}
We can therefore rewrite the definition (\ref{toeplitz1}) of $\widehat{A}%
_{\phi}$ as
\begin{equation}
\widehat{A}_{\phi}\psi=(2\pi\hbar)^{n}\int a(z)\operatorname{Amb}(\psi
,\phi)(z)\widehat{T}(z)\phi d^{2n}z \label{toeplitz2}%
\end{equation}
which is essentially the definition of single-windowed Toeplitz operators
given in the time-frequency analysis literature (see \textit{e.g.}
\cite{cogr,coro,cogr0,Toft}). Toeplitz operators are linear continuous
operators $\mathcal{S}(\mathbb{R}^{n})\longrightarrow\mathcal{S}^{\prime
}(\mathbb{R}^{n}$); in view of Schwartz's kernel theorem they are thus
automatically Weyl operators. In fact:

\begin{proposition}
\label{Prop3}The Toeplitz operator $\widehat{A}_{\phi}=\operatorname*{Op}%
\nolimits_{\phi}(a)$ has Weyl symbol $(2\pi\hbar)^{n}(a\ast W\phi)$, that is
\begin{equation}
\widehat{A}_{\phi}=(2\pi\hbar)^{n}\operatorname*{Op}\nolimits_{\mathrm{W}%
}(a\ast W\phi)~. \label{toeplitz3}%
\end{equation}

\end{proposition}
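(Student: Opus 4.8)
The plan is to compute the Weyl symbol of $\widehat{A}_\phi$ directly from the integral representation \eqref{toeplitz1} and the symbol of the elementary projectors $\widehat{\Pi}_\phi(z)$. First I would identify the Weyl symbol of $\widehat{\Pi}_\phi(z)$. By Lemma~\ref{Lemma1}, the projection $\widehat{\Pi}_\phi$ onto $\mathbb{C}\phi$ has Weyl symbol $(2\pi\hbar)^n W\phi$. Now $\widehat{\Pi}_\phi(z)$ is the projection onto $\mathbb{C}\,\widehat{T}(z)\phi$, i.e.\ $\widehat{\Pi}_\phi(z)=\widehat{T}(z)\widehat{\Pi}_\phi\widehat{T}(z)^{\ast}$. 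Using the standard Heisenberg--Weyl covariance of the Weyl calculus --- conjugation by $\widehat{T}(z)$ translates the Weyl symbol by $z$ --- together with the translation property $W(\widehat{T}(z)\phi)(\cdot)=W\phi(\cdot - z)$ (already used implicitly in the Introduction, around \eqref{Gabor1}), one gets that $\widehat{\Pi}_\phi(z)$ has Weyl symbol $(2\pi\hbar)^n W\phi(\cdot - z)$.

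Next I would integrate against $a(z)$. From \eqref{toeplitz1}, linearity of the Weyl correspondence $\operatorname*{Op}\nolimits_{\mathrm{W}}$ gives, at least formally,
\[
\widehat{A}_\phi=\int a(z)\,\widehat{\Pi}_\phi(z)\,d^{2n}z
=(2\pi\hbar)^n\int a(z)\operatorname*{Op}\nolimits_{\mathrm{W}}\!\big(W\phi(\cdot - z)\big)\,d^{2n}z
=(2\pi\hbar)^n\operatorname*{Op}\nolimits_{\mathrm{W}}\!\Big(\int a(z)\,W\phi(\cdot - z)\,d^{2n}z\Big),
\]
and the inner integral is by definition the convolution $(a\ast W\phi)$, which yields \eqref{toeplitz3}. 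Alternatively, and perhaps more cleanly, I would argue at the level of distributional kernels: the kernel of $\widehat{\Pi}_\phi(z)$ is $(\widehat{T}(z)\phi)(x)(\widehat{T}(z)\phi)^{\ast}(y)$, integrate in $z$ against $a(z)$ to get the kernel $K(x,y)$ of $\widehat{A}_\phi$, then apply the kernel-to-Weyl-symbol formula recalled in the proof of Lemma~\ref{Lemma1}, namely $a_{\widehat{A}}(z)=\int e^{-\frac{i}{\hbar}py}K(x+\tfrac12 y,x-\tfrac12 y)\,d^n y$, and recognize the resulting expression as $(2\pi\hbar)^n(a\ast W\phi)$ after a change of variables in the $z$-integral.

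The main obstacle is not the algebra --- which is a short covariance computation --- but the justification of the interchange of the $z$-integration with the (continuous, unbounded) operation $\operatorname*{Op}\nolimits_{\mathrm{W}}$, and the sense in which \eqref{toeplitz1} converges. This is exactly where the hypothesis $\phi\in M^1(\mathbb{R}^n)$ does the work: by the inclusions \eqref{inclusions}, $W\phi\in L^1(\mathbb{R}^{2n})$, so $a\ast W\phi$ is a well-defined tempered distribution whenever $a\in\mathcal{S}'(\mathbb{R}^{2n})$ (or a bona fide $L^1_{\mathrm{loc}}$ function for $a$ in reasonable classes), and the Toeplitz operator maps $\mathcal{S}(\mathbb{R}^n)\to\mathcal{S}'(\mathbb{R}^n)$ continuously. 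I would therefore first establish the identity \eqref{toeplitz3} for $a$ in a dense well-behaved subspace (say $a\in\mathcal{S}(\mathbb{R}^{2n})$ and $\phi\in\mathcal{S}(\mathbb{R}^n)$), where all integrals converge absolutely and Fubini applies, and then extend to general $a$ and $\phi\in M^1(\mathbb{R}^n)$ by a density/continuity argument, testing both sides against $\psi\in\mathcal{S}(\mathbb{R}^n)$ and using the continuity of the cross-ambiguity/cross-Wigner maps on $M^1$.
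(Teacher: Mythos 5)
Your proposal is correct and follows essentially the same route as the paper: identify the Weyl symbol of the rank-one projector $\widehat{\Pi}_{\phi}(z)$ as the translate $(2\pi\hbar)^{n}W\phi(\cdot-z)$ via Lemma~\ref{Lemma1} and translational covariance, then integrate against $a(z)$ and recognize the convolution, justifying the interchange for $a\in\mathcal{S}(\mathbb{R}^{2n})$ by Fubini. The paper merely phrases the final step weakly, pairing $(\widehat{A}_{\phi}\psi|\chi)_{L^{2}}$ with the cross-Wigner transform $W(\psi,\chi)$ rather than moving the integral through $\operatorname*{Op}\nolimits_{\mathrm{W}}$ directly, which amounts to the same computation.
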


\begin{proof}
It is sufficient to assume that $a\in\mathcal{S}(\mathbb{R}^{2n})$. Let
$\pi_{\phi}(z)$ be the Weyl symbol of the orthogonal projection $\widehat{\Pi
}_{\phi}(z)$ on $\widehat{T}(z)\phi$; we thus have%
\begin{equation}
(\widehat{\Pi}_{\phi}(z)\psi|\chi)_{L^{2}}=\int\pi_{\phi}(z)W(\psi
,\chi)(z^{\prime})d^{2n}z^{\prime}%
\end{equation}
for all $\psi,\chi\in\mathcal{S}(\mathbb{R}^{n})$ (see \textit{e.g.}
\cite{Birkbis}, \S 10.1.2). Lemma \ref{Lemma1} and the translational
covariance of the Wigner transform (\cite{Birkbis}, \S 9.2.2) imply that we
have
\[
\pi_{\phi}(z)(z^{\prime})=(2\pi\hbar)^{n}W(\widehat{T}(z)\phi)(z^{\prime
})=(2\pi\hbar)^{n}W\phi(z-z^{\prime})
\]
and hence
\begin{equation}
(\widehat{\Pi}_{\phi}(z)\psi|\chi)_{L^{2}}=(2\pi\hbar)^{n}\int W\phi
(z-z^{\prime})W(\psi,\chi)(z^{\prime})d^{2n}z^{\prime}~. \label{pio2}%
\end{equation}
Using definition\ (\ref{toeplitz1}) we thus have, by the Fubini--Tonnelli
theorem,
\begin{align*}
(\widehat{A}_{\phi}\psi|\chi)_{L^{2}}  &  =\int a(z)(\widehat{\Pi}_{\phi
}(z)\psi|\chi)_{L^{2}}d^{2n}z\\
&  =(2\pi\hbar)^{n}\int a(z)\left[  \int W\phi(z-z^{\prime})W(\psi
,\chi)(z^{\prime})d^{2n}z^{\prime}\right]  d^{2n}z\\
&  =(2\pi\hbar)^{n}\int\left[  \int a(z)W\phi(z-z^{\prime})d^{2n}z^{\prime
}\right]  W(\psi,\chi)(z^{\prime})d^{2n}z^{\prime}%
\end{align*}
hence the Weyl symbol of $\widehat{A}_{\phi}$ is $a=(2\pi\hbar)^{n}W\phi
\ast\mu$ as claimed.
\end{proof}

We next prove an extension of the usual symplectic covariance result
\cite{Folland,Birkbis,Littlejohn}
\begin{equation}
\operatorname*{Op}\nolimits_{\mathrm{W}}(a\circ S^{-1})=\widehat{S}%
\operatorname*{Op}\nolimits_{\mathrm{W}}(a)\widehat{S}^{-1} \label{sympco1}%
\end{equation}
for Weyl operators to the Toeplitz case. We recall the associated symplectic
covariance formulas (\cite{Birkbis}, \S 8.1.3 and 10.3.1)%
\begin{equation}
\widehat{S}\widehat{T}(z)\widehat{S}^{-1}=\widehat{T}(Sz)\text{ \ },\text{
\ }W(\widehat{S}\psi)(z)=W(\psi)(S^{-1}z)~. \label{sympco3}%
\end{equation}

\begin{corollary}
Let $\widehat{A}_{\phi}=\operatorname*{Op}\nolimits_{\phi}(a)$ and
$\widehat{S}\in\operatorname*{Mp}(n)$ have projection $S\in\operatorname*{Sp}%
(n)$. We have%
\begin{equation}
\operatorname*{Op}\nolimits_{\phi}(a\circ S^{-1})=\widehat{S}%
\operatorname*{Op}\nolimits_{\widehat{S}^{-1}\phi}(a)\widehat{S}^{-1}~.
\label{sympco2}%
\end{equation}

\end{corollary}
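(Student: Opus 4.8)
The plan is to work directly from the definition \eqref{toeplitz1} of the Toeplitz operator and to exploit the two covariance identities in \eqref{sympco3} together with the unitarity of metaplectic operators on $L^{2}(\mathbb{R}^{n})$. First I would insert the symbol $a\circ S^{-1}$ into \eqref{toeplitz1} and perform the symplectic change of variables $z=Sw$; since $S\in\operatorname{Sp}(n)$ has determinant one we have $d^{2n}z=d^{2n}w$, so that
\[
\operatorname*{Op}\nolimits_{\phi}(a\circ S^{-1})=\int a(z)\,\widehat{\Pi}_{\phi}(S^{-1}z)^{\!-1}\;\text{--- more precisely ---}\;\int a(w)\,\widehat{\Pi}_{\phi}(Sw)\,d^{2n}w .
\]
(The point is simply that $a(S^{-1}z)=a(w)$ when $z=Sw$, and the projection index becomes $Sw$.)

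The key step is then to identify $\widehat{\Pi}_{\phi}(Sw)$. By definition it is the orthogonal projection onto $\mathbb{C}\,\widehat{T}(Sw)\phi$, and the first identity in \eqref{sympco3} gives $\widehat{T}(Sw)\phi=\widehat{S}\widehat{T}(w)\widehat{S}^{-1}\phi=\widehat{S}\bigl(\widehat{T}(w)(\widehat{S}^{-1}\phi)\bigr)$. Since $\widehat{S}$ is unitary, conjugation $\chi\mapsto \widehat{S}\,\widehat{\Pi}_{\chi}\,\widehat{S}^{-1}$ carries the orthogonal projection onto $\mathbb{C}\chi$ into the orthogonal projection onto $\mathbb{C}\widehat{S}\chi$; taking $\chi=\widehat{T}(w)(\widehat{S}^{-1}\phi)$ yields
\[
\widehat{\Pi}_{\phi}(Sw)=\widehat{S}\,\widehat{\Pi}_{\widehat{S}^{-1}\phi}(w)\,\widehat{S}^{-1}.
\]
Here one uses that $\widehat{S}^{-1}\phi\in M^{1}(\mathbb{R}^{n})$, which is guaranteed by the metaplectic invariance of Feichtinger's algebra recalled in Section 2, so that $\operatorname*{Op}\nolimits_{\widehat{S}^{-1}\phi}$ is a legitimate Toeplitz operator. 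Substituting this back and pulling the bounded operators $\widehat{S},\widehat{S}^{-1}$ outside the integral — which is justified because the integral defining the Toeplitz operator converges weakly against test functions — gives $\int a(w)\widehat{S}\,\widehat{\Pi}_{\widehat{S}^{-1}\phi}(w)\,\widehat{S}^{-1}\,d^{2n}w=\widehat{S}\bigl(\int a(w)\widehat{\Pi}_{\widehat{S}^{-1}\phi}(w)\,d^{2n}w\bigr)\widehat{S}^{-1}=\widehat{S}\operatorname*{Op}\nolimits_{\widehat{S}^{-1}\phi}(a)\widehat{S}^{-1}$, which is \eqref{sympco2}.

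Alternatively, one can avoid handling projections and deduce the identity purely symbolically from Proposition \ref{Prop3}: writing $\operatorname*{Op}\nolimits_{\phi}(a)=(2\pi\hbar)^{n}\operatorname*{Op}\nolimits_{\mathrm{W}}(a\ast W\phi)$, a change of variables in the convolution shows $(a\circ S^{-1})\ast W\phi=\bigl(a\ast(W\phi\circ S)\bigr)\circ S^{-1}$, and the Wigner covariance in \eqref{sympco3} identifies $W\phi\circ S=W(\widehat{S}^{-1}\phi)$; the Weyl symplectic covariance \eqref{sympco1} then converts composition with $S^{-1}$ into conjugation by $\widehat{S}$, and a final application of Proposition \ref{Prop3} recognizes the result as $\widehat{S}\operatorname*{Op}\nolimits_{\widehat{S}^{-1}\phi}(a)\widehat{S}^{-1}$. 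I expect the only genuinely delicate point in either route to be the bookkeeping of which operator ($\widehat{S}$ or $\widehat{S}^{-1}$) appears and on which side — in particular correctly tracking that the window gets transformed by $\widehat{S}^{-1}$, not $\widehat{S}$ — together with the (routine) justification that $\widehat{S}$ may be commuted past the operator-valued integral.
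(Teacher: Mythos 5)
Your argument is correct, and your primary route is genuinely different from the one in the paper. The paper works entirely at the level of Weyl symbols: it invokes Proposition \ref{Prop3} to replace $\operatorname*{Op}_{\phi}(a\circ S^{-1})$ by $(2\pi\hbar)^{n}\operatorname*{Op}_{\mathrm{W}}\left[(a\circ S^{-1})\ast W\phi\right]$, computes the convolution by the substitution $z^{\prime\prime}=S^{-1}z^{\prime}$ using both identities in (\ref{sympco3}) to get $(a\ast W(\widehat{S}^{-1}\phi))\circ S^{-1}$, and then applies the Weyl covariance (\ref{sympco1}); this is exactly the ``alternative'' you sketch in your last paragraph, down to the same intermediate identity $W\phi\circ S=W(\widehat{S}^{-1}\phi)$. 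Your main route instead stays at the operator level: from the definition (\ref{toeplitz1}) you conjugate the rank-one projections directly, using $\widehat{T}(Sw)=\widehat{S}\widehat{T}(w)\widehat{S}^{-1}$ and the unitarity of $\widehat{S}$ to get $\widehat{\Pi}_{\phi}(Sw)=\widehat{S}\,\widehat{\Pi}_{\widehat{S}^{-1}\phi}(w)\,\widehat{S}^{-1}$, and then pull $\widehat{S}$ out of the weak integral. This is arguably more elementary --- it bypasses Proposition \ref{Prop3}, the Wigner covariance formula, and the Weyl symplectic covariance theorem altogether, needing only the Heisenberg--Weyl covariance and unitarity --- at the modest cost of justifying the interchange of $\widehat{S}$ with the operator-valued integral, which you correctly flag and which is routine in the weak sense. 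You also correctly note the metaplectic invariance of $M^{1}(\mathbb{R}^{n})$ needed to make $\widehat{S}^{-1}\phi$ an admissible window, just as the paper does. The only blemish is the garbled first display (the stray ``$\widehat{\Pi}_{\phi}(S^{-1}z)^{-1}$''), but your immediately following correction $\int a(w)\widehat{\Pi}_{\phi}(Sw)\,d^{2n}w$ is the right expression, so no mathematical content is missing.
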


\begin{proof}
We begin by noticing that we have $\widehat{S}^{-1}\phi\in M^{1}%
(\mathbb{R}^{n})$ since $\phi\in M^{1}(\mathbb{R}^{n})$ hence $\widehat{S}%
^{-1}\phi$ is a \textit{bona fide} window. Since $W\phi(z-z^{\prime
})=W(\widehat{T}(z^{\prime})\phi)(z)$ we have, setting $z^{\prime\prime
}=S^{-1}z^{\prime}$ and using the relations (\ref{sympco3}),
\begin{align*}
(a\circ S^{-1})\ast W\phi(z)  &  =\int a(S^{-1}z^{\prime})W(\widehat{T}%
(z^{\prime})\phi)(z)d^{2n}z^{\prime}\\
&  =\int a(z^{\prime\prime})W(\widehat{T}(Sz^{\prime\prime})\phi
)(z)d^{2n}z^{\prime\prime}\\
&  =\int a(z^{\prime\prime})W(\widehat{S}\widehat{T}(z^{\prime\prime
})\widehat{S}^{-1}\phi)(z)d^{2n}z^{\prime\prime}\\
&  =\int a(z^{\prime\prime})W(\widehat{S}\widehat{T}(z^{\prime\prime
})\widehat{S}^{-1}\phi)(z)d^{2n}z^{\prime\prime}\\
&  =\int a(z^{\prime\prime})W(\widehat{T}(z^{\prime\prime})\widehat{S}%
^{-1}\phi)(S^{-1}z)d^{2n}z^{\prime\prime}\\
&  =(a\ast W(\widehat{S}^{-1}\phi)(S^{-1}z)
\end{align*}
It follows, using the covariance formula (\ref{sympco1}) for Weyl operators
that%
\[
\operatorname*{Op}\nolimits_{\mathrm{W}}\left[  (a\circ S^{-1})\ast
W\phi\right]  =\widehat{S}\operatorname*{Op}\nolimits_{\mathrm{W}}\left[
(a\ast W(\widehat{S}^{-1}\phi)\right]  \widehat{S}^{-1}%
\]
which is precisely (\ref{sympco2}) in view of formula (\ref{toeplitz3}) in
Proposition \ref{Prop3}.
\end{proof}

\section{Toeplitz Quantum States}

The following boundedness result was proven by Gr\"{o}chenig in \cite{Gro96},
Thm. 3; it is a particular case of Thm. 3.1 in Cordero and Gr\"{o}chenig
\cite{cogr0}:

\begin{proposition}
\label{Prop4}Let $a\in M^{1}(\mathbb{R}^{2n})$, then $\widehat{A}%
=\operatorname*{Op}\nolimits_{\mathrm{W}}(a)$ is of trace class:
$\widehat{A}\in\mathcal{L}_{1}(L^{2}(\mathbb{R}^{n}))$ and $||\widehat{A}%
||_{\mathcal{L}_{1}}\leq C||a||_{M^{1}}$ for some $C>0$ ($||\cdot
||_{\mathcal{L}_{1}}$ the trace norm and $||\cdot||_{M^{1}}$ a norm on
$M^{1}(\mathbb{R}^{2n})$).
\end{proposition}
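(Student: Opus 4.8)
The plan is to reduce the trace-class estimate to the single simplest possible symbol --- a Gaussian --- by combining two ingredients: the atomic (Gabor) decomposition of Feichtinger's algebra $M^{1}(\mathbb{R}^{2n})$, and the behaviour of the Weyl calculus under phase-space translations and modulations of the symbol.

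First I would fix the Gaussian $\phi_{0}(x)=(\pi\hbar)^{-n/4}e^{-|x|^{2}/2\hbar}$ and set $g_{0}=W\phi_{0}\in\mathcal{S}(\mathbb{R}^{2n})\subset M^{1}(\mathbb{R}^{2n})$. By Lemma \ref{Lemma1}, $\operatorname*{Op}\nolimits_{\mathrm{W}}(g_{0})=(2\pi\hbar)^{-n}\widehat{\Pi}_{\phi_{0}}$ is $(2\pi\hbar)^{-n}$ times a rank-one orthogonal projection, hence trace class with $||\operatorname*{Op}\nolimits_{\mathrm{W}}(g_{0})||_{\mathcal{L}_{1}}=(2\pi\hbar)^{-n}$. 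Next I would record how translations and modulations of a symbol act on its Weyl operator. Writing $\operatorname*{Op}\nolimits_{\mathrm{W}}(a)=\int a_{\sigma}(\zeta)\widehat{T}(\zeta)d^{2n}\zeta$ as in (\ref{aweyl}), using (\ref{sft}) and the Heisenberg relation $\widehat{T}(z_{0})\widehat{T}(\zeta)\widehat{T}(z_{0})^{-1}=e^{\frac{i}{\hbar}\sigma(z_{0},\zeta)}\widehat{T}(\zeta)$, one checks that the symbol translation $a\mapsto a(\cdot-z_{0})$ corresponds to conjugating the operator by the unitary $\widehat{T}(z_{0})$, while modulation of the symbol by a character corresponds to left-multiplication by a Heisenberg--Weyl operator together with a further such conjugation. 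Consequently, for any time-frequency shift $\pi(z_{0},\zeta_{0})a$ of $a$ (translation by $z_{0}$ followed by modulation by $\zeta_{0}$ --- the building blocks of the Gabor decomposition of $M^{1}(\mathbb{R}^{2n})$), the operator $\operatorname*{Op}\nolimits_{\mathrm{W}}(\pi(z_{0},\zeta_{0})a)$ is obtained from $\operatorname*{Op}\nolimits_{\mathrm{W}}(a)$ by pre- and post-multiplication by unitary operators; since the trace norm is invariant under multiplication by unitaries on either side, $||\operatorname*{Op}\nolimits_{\mathrm{W}}(\pi(z_{0},\zeta_{0})a)||_{\mathcal{L}_{1}}=||\operatorname*{Op}\nolimits_{\mathrm{W}}(a)||_{\mathcal{L}_{1}}$.

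The core input is then the atomic decomposition of $M^{1}(\mathbb{R}^{2n})$ (Feichtinger--Gr\"{o}chenig coorbit theory, i.e. Gabor frame expansions with $\ell^{1}$-summable coefficients): for a suitable lattice there are phase-space points $\lambda_{k}=(z_{k},\zeta_{k})$ and coefficients $c_{k}$ with $\sum_{k}|c_{k}|\leq C||a||_{M^{1}}$ such that $a=\sum_{k}c_{k}\pi(\lambda_{k})g_{0}$, the series converging in $M^{1}(\mathbb{R}^{2n})$ and hence, by (\ref{inclusions}) on $\mathbb{R}^{2n}$, in $L^{2}(\mathbb{R}^{2n})$. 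Applying $\operatorname*{Op}\nolimits_{\mathrm{W}}$ to the partial sums, the estimate of the previous paragraph gives $||c_{k}\operatorname*{Op}\nolimits_{\mathrm{W}}(\pi(\lambda_{k})g_{0})||_{\mathcal{L}_{1}}=|c_{k}|(2\pi\hbar)^{-n}$, so the partial sums of $\sum_{k}c_{k}\operatorname*{Op}\nolimits_{\mathrm{W}}(\pi(\lambda_{k})g_{0})$ are Cauchy in $\mathcal{L}_{1}$; by completeness of $\mathcal{L}_{1}$ they converge in trace norm to some $\widehat{B}\in\mathcal{L}_{1}$. On the other hand the same partial sums converge to $\operatorname*{Op}\nolimits_{\mathrm{W}}(a)$ in Hilbert--Schmidt norm, since the Weyl correspondence is, up to a constant, an isometry from $L^{2}(\mathbb{R}^{2n})$ onto the Hilbert--Schmidt class; both modes of convergence imply convergence in the weak operator topology, so $\widehat{B}=\operatorname*{Op}\nolimits_{\mathrm{W}}(a)$. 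Summing the term-by-term bounds gives $||\operatorname*{Op}\nolimits_{\mathrm{W}}(a)||_{\mathcal{L}_{1}}\leq(2\pi\hbar)^{-n}\sum_{k}|c_{k}|\leq(2\pi\hbar)^{-n}C||a||_{M^{1}}$, which is the assertion.

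The only genuinely deep ingredient is the $M^{1}$ atomic decomposition with coefficient control $\sum_{k}|c_{k}|\leq C||a||_{M^{1}}$; everything else is the covariance bookkeeping above plus unitary invariance of the trace norm. The step I expect to need the most care is the identification of the $\mathcal{L}_{1}$-limit of the partial sums with $\operatorname*{Op}\nolimits_{\mathrm{W}}(a)$ --- handled, as above, by comparing with the weaker $L^{2}$/Hilbert--Schmidt convergence --- together with the verification of the precise phase factors in the modulation-covariance identity (immaterial for the norm estimate but needed to see that $\pi(\lambda_{k})g_{0}$ is really sent to a unitary conjugate/multiple of $\operatorname*{Op}\nolimits_{\mathrm{W}}(g_{0})$). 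Alternatively, as the statement already indicates, one may simply invoke \cite{Gro96} or \cite{cogr0} directly.
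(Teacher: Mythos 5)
Your argument is correct, but note that the paper does not actually prove Proposition \ref{Prop4} at all: it imports the statement from Gr\"ochenig \cite{Gro96} (Thm.~3) and Cordero--Gr\"ochenig \cite{cogr0} (Thm.~3.1). What you have written is essentially a reconstruction of the standard proof behind those citations (cf.\ also \cite{Gro}), and both pillars of your reduction are sound. For the covariance step, the identity you need is exact and phase-free: from $\widehat{T}(u)\widehat{T}(v)\widehat{T}(u)=\widehat{T}(2u+v)$ one gets
\[
\operatorname*{Op}\nolimits_{\mathrm{W}}\bigl(a(\cdot-z_{0})\bigr)=\widehat{T}(z_{0})\operatorname*{Op}\nolimits_{\mathrm{W}}(a)\widehat{T}(z_{0})^{-1},\qquad
\operatorname*{Op}\nolimits_{\mathrm{W}}\bigl(e^{\frac{i}{\hbar}\sigma(\zeta_{0},\cdot)}a\bigr)=\widehat{T}(\tfrac{1}{2}\zeta_{0})\operatorname*{Op}\nolimits_{\mathrm{W}}(a)\widehat{T}(\tfrac{1}{2}\zeta_{0}),
\]
so the phase bookkeeping you were worried about actually cancels and every $\operatorname*{Op}\nolimits_{\mathrm{W}}(\pi(\lambda_{k})g_{0})$ is a two-sided unitary multiple of $(2\pi\hbar)^{-n}\widehat{\Pi}_{\phi_{0}}$, hence rank one with trace norm $(2\pi\hbar)^{-n}$ by Lemma \ref{Lemma1}. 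The one genuinely deep input is, as you say, the atomic decomposition of $M^{1}(\mathbb{R}^{2n})$ with $\ell^{1}$ coefficient control for a Gaussian window over a sufficiently dense lattice; that is Feichtinger--Gr\"ochenig coorbit/Gabor-frame theory and has to be quoted rather than rederived, so your proof is not more elementary than the citation --- it just relocates the depth. Your identification of the $\mathcal{L}_{1}$-limit of the partial sums with $\operatorname*{Op}\nolimits_{\mathrm{W}}(a)$, by comparing trace-norm convergence with Hilbert--Schmidt convergence (the Weyl correspondence being a constant multiple of a unitary from $L^{2}(\mathbb{R}^{2n})$ onto the Hilbert--Schmidt class) and matching the limits in the weak operator topology, is the correct way to close the argument. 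What your version buys over the paper's bare citation is transparency: it exhibits $\operatorname*{Op}\nolimits_{\mathrm{W}}(a)$ as an absolutely convergent superposition of rank-one operators of uniformly bounded trace norm, which is exactly why $M^{1}$ is the natural symbol class for trace-class estimates.
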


Our main result makes use of the class of symbols $M^{1,\infty}(\mathbb{R}%
^{2n})$, which is a particular modulation space containing $L^{1}%
(\mathbb{R}^{2n})$. It is defined as follows \cite{Gro}: $a\in M^{1,\infty
}(\mathbb{R}^{2n})$ if and only if for some (and hence every) $b\in
\mathcal{S}(\mathbb{R}^{2n})$%
\begin{equation}
\sup_{\zeta\in\mathbb{R}^{2n}}|W^{2n}(a,b)(z,\zeta)|\in L^{1}(\mathbb{R}%
_{z}^{n})~. \tag{A8}\label{sjdual1}%
\end{equation}
When $b$ describes $\mathcal{S}(\mathbb{R}^{2n})$ the mappings $a\longmapsto
||a||_{b}$ defined by
\begin{subequations}
\begin{equation}
||a||_{b}=\int\sup_{\zeta\in\mathbb{R}^{2n}}|W^{2n}(a,b)(z,\zeta)|d^{2n}z
\label{ab}%
\end{equation}
is a family of equivalent norms on $M^{1,\infty}(\mathbb{R}^{2n})$ which
becomes a Banach space for the topology they define.

We have the following important convolution property between the symbol class
$M^{1,\infty}(\mathbb{R}^{2n})$ and the Feichtinger algebra $M^{1}%
(\mathbb{R}^{2n})$:%
\end{subequations}
\begin{equation}
M^{1,\infty}(\mathbb{R}^{2n})\ast M^{1}(\mathbb{R}^{2n})\subset M^{1}%
(\mathbb{R}^{2n}) \label{convincl}%
\end{equation}
(Prop. 2.4 in \cite{cogr0}).

Let us state and prove our main result. We assume that $\mu$ is a Borel
probability density function on $\mathbb{R}^{2n}$ with respect to the Lebesgue
measure: $\mu\in L^{1}(\mathbb{R}^{2n})$ and
\begin{equation}
\mu(z)\geq0\text{ \ },\text{ \ }\int\mu(z)d^{2n}z=1~. \label{muz}%
\end{equation}

\begin{theorem}
Let $\mu\in M^{1,\infty}(\mathbb{R}^{2n})$ and $\phi\in M^{1}(\mathbb{R}^{n}%
)$. Then%
\[
\widehat{\rho}=(2\pi\hbar)^{n}\operatorname*{Op}\nolimits_{\phi}(\mu
)=(2\pi\hbar)^{n}\operatorname*{Op}\nolimits_{\mathrm{W}}(\mu\ast W\phi)
\]
is a density operator, and there exists $C>0$ such that%
\[
||\widehat{\rho}||_{\mathcal{L}_{1}}\leq C||\mu\ast W\phi||_{M^{1,\infty}%
}||\phi||_{M^{1}}^{2}~.
\]

\end{theorem}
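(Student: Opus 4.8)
The plan is to verify the three defining conditions of a density operator --- positivity, self-adjointness, and unit trace --- and then to establish the trace-norm estimate, drawing on the propositions already at our disposal.

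First I would dispose of self-adjointness and positivity, which are cheap. Since $\mu$ is a nonnegative real-valued function, the defining formula $\widehat{\rho} = (2\pi\hbar)^n \int \mu(z)\widehat{\Pi}_\phi(z)\, d^{2n}z$ exhibits $\widehat{\rho}$ as a (continuous) superposition of the orthogonal projections $\widehat{\Pi}_\phi(z)$ with nonnegative weights. Each $\widehat{\Pi}_\phi(z)$ is self-adjoint and positive semidefinite, so for every $\psi$ in the appropriate space we get $(\widehat{\rho}\psi|\psi)_{L^2} = (2\pi\hbar)^n \int \mu(z)\,|(\psi|\widehat{T}(z)\phi)_{L^2}|^2\, d^{2n}z \geq 0$, which gives both $\widehat{\rho}\geq 0$ and (since $L^2(\mathbb{R}^n)$ is complex) $\widehat{\rho}=\widehat{\rho}^\ast$. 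Here I should be slightly careful to justify interchanging the integral with the inner product, but $\mu\in L^1$, $\phi\in L^2$, and the integrand is dominated by $(2\pi\hbar)^n\|\psi\|^2\|\phi\|^2\mu(z)$, so this is routine.

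The substantive step is trace class membership together with the trace value. By Proposition \ref{Prop3} we have $\widehat{\rho}=(2\pi\hbar)^n\operatorname{Op}_{\mathrm{W}}(\mu\ast W\phi)$. Now invoke the convolution inclusion (\ref{convincl}): since $\mu\in M^{1,\infty}(\mathbb{R}^{2n})$ and $\phi\in M^1(\mathbb{R}^n)$, we have $W\phi\in M^1(\mathbb{R}^{2n})$ (this is the standard fact that the Wigner transform maps $M^1(\mathbb{R}^n)$ into $M^1(\mathbb{R}^{2n})$, which follows from property (A3) and the inclusions (A2), or can be cited from the modulation space literature), hence $\mu\ast W\phi\in M^1(\mathbb{R}^{2n})$. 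Proposition \ref{Prop4} then yields $\widehat{\rho}\in\mathcal{L}_1(L^2(\mathbb{R}^n))$ with $\|\widehat{\rho}\|_{\mathcal{L}_1}\leq (2\pi\hbar)^n C'\|\mu\ast W\phi\|_{M^1}$; combining this with the norm inequality for convolution in (\ref{convincl}), namely $\|\mu\ast W\phi\|_{M^1}\leq C''\|\mu\|_{M^{1,\infty}}\|W\phi\|_{M^1}$, and with $\|W\phi\|_{M^1}\leq C'''\|\phi\|_{M^1}^2$ (quadratic because the Wigner transform is quadratic in $\phi$), we obtain an estimate of the claimed shape. (The statement writes $\|\mu\ast W\phi\|_{M^{1,\infty}}\|\phi\|_{M^1}^2$ on the right; I would note that $M^1\subset M^{1,\infty}$ continuously, so the displayed bound is implied by --- indeed weaker than --- the cleaner bound $\|\widehat{\rho}\|_{\mathcal{L}_1}\leq C\|\mu\|_{M^{1,\infty}}\|\phi\|_{M^1}^2$; either way the constants get absorbed into $C$.)

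For the trace value, compute $\operatorname{Tr}(\widehat{\rho})=(2\pi\hbar)^n\int(\mu\ast W\phi)(z)\,d^{2n}z$, using the standard identity $\operatorname{Tr}\operatorname{Op}_{\mathrm{W}}(a)=(2\pi\hbar)^{-n}\int a(z)\,d^{2n}z$ valid for trace class Weyl operators with integrable symbol. Since integration turns convolution into the product of integrals, $\int(\mu\ast W\phi)(z)\,d^{2n}z=\left(\int\mu(z)\,d^{2n}z\right)\left(\int W\phi(z)\,d^{2n}z\right)=1\cdot\|\phi\|_{L^2}^2$. The normalization (\ref{muz}) gives the first factor, and the marginal identity (\ref{marge}) with $\psi=\phi$ gives $\int W\phi = \|\phi\|_{L^2}^2$. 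One detail to pin down for full rigor: whether $\phi$ is assumed $L^2$-normalized. If $\|\phi\|_{L^2}=1$ then $\operatorname{Tr}(\widehat{\rho})=1$ on the nose; otherwise one should either add that normalization hypothesis or replace $\widehat{\rho}$ by $\widehat{\rho}/\|\phi\|_{L^2}^2$. I expect this bookkeeping about the normalization of $\phi$ to be the only genuine subtlety --- the rest is a clean chaining of Propositions \ref{Prop3} and \ref{Prop4} with the convolution inclusion (\ref{convincl}) --- so the main ``obstacle'' is really just stating the hypotheses precisely enough that $\operatorname{Tr}(\widehat{\rho})=1$ holds verbatim.
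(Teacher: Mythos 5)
Your proposal is correct and follows essentially the same route as the paper's proof: positivity from $(\widehat{\Pi}_{\phi}(z)\psi|\psi)_{L^{2}}=|(\psi|\widehat{T}(z)\phi)_{L^{2}}|^{2}\geq0$, trace-class membership from $W\phi\in M^{1}(\mathbb{R}^{2n})$ combined with the convolution inclusion (\ref{convincl}) and Proposition \ref{Prop4}, and $\operatorname*{Tr}(\widehat{\rho})=1$ from the trace formula $\operatorname*{Tr}\operatorname*{Op}_{\mathrm{W}}(a)=(2\pi\hbar)^{-n}\int a\,d^{2n}z$ together with the factorization of $\int(\mu\ast W\phi)$ --- and you correctly flag the tacit normalization $\|\phi\|_{L^{2}}=1$ that the paper invokes in its proof without listing it among the hypotheses. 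The one caveat is that your parenthetical justification of $W\phi\in M^{1}(\mathbb{R}^{2n})$ via (A3) and (A2) does not suffice, since (A3) only yields $W\phi\in L^{1}(\mathbb{R}^{2n})$, which is strictly weaker than membership in $M^{1}(\mathbb{R}^{2n})$; your fallback of citing the modulation-space literature is the right move and is essentially what the paper does (Prop.~2.5 of \cite{cogr0}).
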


\begin{proof}
Let us prove that $\widehat{\rho}\in\mathcal{L}_{1}(L^{2}(\mathbb{R}^{n}))$.
In view of Proposition \ref{Prop4} it is sufficient to show that the Weyl
symbol $a=(2\pi\hbar)^{n}(\mu\ast W\phi)$ is in $M^{1}(\mathbb{R}^{2n})$. We
begin by observing that the condition $\phi\in M^{1}(\mathbb{R}^{n})$ implies
that $W\phi\in M^{1}(\mathbb{R}^{2n})$; this property is in fact a consequence
of the more general result Prop.2.5 in \cite{cogr0} but we give here a direct
independent proof. Let $b\in\mathcal{S}(\mathbb{R}^{2n})$; denoting by
$W^{2n}$ the cross-Wigner transform on $\mathbb{R}^{2n}$ we have (property
(\ref{marge}))%
\[
|\iint W^{2n}(W\phi,b)(z,\zeta)d^{2n}zd^{2n}\zeta|=|(W\phi|b)_{L^{2}%
(\mathbb{R}^{2n})}|<\infty
\]
and hence $W\phi\in M^{1}(\mathbb{R}^{2n})$ as claimed. In view of the
convolution property (\ref{convincl}) we have $\mu\ast W\phi\in M^{1}%
(\mathbb{R}^{2n})$ as desired. Let us show that $\widehat{\rho}\geq0$ (and
hence $\widehat{\rho}^{\ast}=\widehat{\rho}$) for all $\psi\in L^{2}%
(\mathbb{R}^{n})$. We have
\[
(\widehat{\rho}\psi|\psi)_{L^{2}}=\int\mu(z)(\widehat{\Pi}_{\phi}(z)\psi
|\psi)_{L^{2}}d^{2n}z~;
\]
since by definition
\[
\widehat{\Pi}_{\phi}\psi=(\psi|\widehat{T}(z)\phi)_{L^{2}}\widehat{T}(z)\phi
\]
we get%
\[
(\widehat{\Pi}_{\phi}(z)\psi|\psi)_{L^{2}}=|(\psi|\widehat{T}(z)\phi)|^{2}%
\geq0~.
\]
Let us finally prove that $\operatorname*{Tr}(\widehat{\rho})=1$. We have
$a\in M^{1}(\mathbb{R}^{2n})\subset L^{1}(\mathbb{R}^{2n})$ hence
(\cite{duwong}; also \cite{Birkbis} \S 12.3.2):
\[
\operatorname*{Tr}(\widehat{\rho})=(2\pi\hbar)^{-n}\int a(z)d^{2n}z=a_{\sigma
}(0)~.
\]
But we have
\[
a_{\sigma}=F_{\sigma}(W\phi\ast\mu)=(2\pi\hbar)^{2n}(F_{\sigma}W\phi
)(F_{\sigma}\mu)
\]
and hence, since $||\phi||_{L^{2}}=1$,%
\[
a_{\sigma}(0)=\int W\phi(z)d^{2n}z\int\mu(z)d^{2n}z=1
\]
that is $\operatorname*{Tr}(\widehat{\rho})=1$.
\end{proof}

A typical example is provided by anti--Wick quantization. Choose for window
$\phi$ the standard coherent state $\phi_{0}$:%
\begin{equation}
\phi_{0}(x)=(\pi\hbar)^{-n/4}e^{-|x|^{2}/2\hbar}~. \label{jstand}%
\end{equation}
Its Wigner transform is given by \cite{Bast,Birkbis,Wigner,Littlejohn}
\begin{equation}
W\phi_{0}(z)=(\pi\hbar)^{-n}e^{-|z|^{2}/\hbar} \label{wjstand}%
\end{equation}
and is thus a classical probability density
\begin{equation}
W\phi_{0}(z)>0\text{ \ },\text{ \ }\int W\phi_{0}(z)d^{n}z=1~. \label{wifobis}%
\end{equation}
It follows that $\mu\ast\phi$ is itself a probability density and that
$\operatorname*{Tr}(\widehat{\rho})=1$ (for an alternative proof see Boggiatto
and Cordero \cite{boco1}, Thm. 2.4).

\begin{acknowledgement}
This work was written while the author was holding the Giovanni Prodi visiting
professorship at the Julius-Maximilians-Universit\"{a}t W\"{u}rzburg during
the summer semester 2019.
\end{acknowledgement}

\end{document}